\newcommand{\ifConferenceVersion}{\iftrue}
\newcommand{\ifJournalVersion}{\iffalse}
\newcommand{\comment}[1]{}
	\newcommand{\InConference}[1]{#1}
	\newcommand{\InJournal}[1]{}
	\newtheorem{definition}{Definition}[section]
	\newtheorem{lemma}[definition]{Lemma}
	\newtheorem{theorem}[definition]{Theorem}
	\newtheorem{proposition}[definition]{Proposition}
	\newtheorem{example}[definition]{Example}
	\newtheorem{observation}[definition]{Observation}
	\newcommand{\InConference}[1]{}
	\newcommand{\InJournal}[1]{#1}
	\newtheorem{definition}{Definition}[section]
	\newtheorem{lemma}[definition]{Lemma}
	\newtheorem{theorem}[definition]{Theorem}
	\newtheorem{observation}[definition]{Observation}
\newcommand{\bigo}{\mathcal{O}}
\newcommand{\agent}{\ensuremath{\mathcal{A}}\xspace}
\newcommand{\port}{\ensuremath{\mathsf{port}}}
\newcommand{\vdeg}{\ensuremath{\mathsf{deg}}}
\newcommand{\RR}{{\textsc{Rotor-Router}}\xspace}
\newcommand{\state}{\mathcal{S}}
\newcommand{\mem}{\mathcal{M}}
\begin{document}

\InConference
{

	\author[1]{Artur Menc}
	\affil[1]{Faculty of Fundamental Problems of Technology, Wroclaw University of Technology,  Poland}

	\author[2]{Dominik Paj\k{a}k\footnote{Corresponding author. E-mail dsp39@cl.cam.ac.uk}}
	\affil[2]{Computer Laboratory, University of Cambridge, UK}

	\author[3]{Przemys\l{}aw Uznański\thanks{Part of this work was done while D. Paj\k{a}k was visiting P. Uznański at  Aix-Marseille Université. Partially supported by the Labex Archimède and by the ANR project MACARON (ANR-13-JS02-0002)}}\affil[3]{Helsinki Institute for Information Technology HIIT, Department of Computer Science, Aalto University, Finland}
}

\InJournal
{
	\author[1]{Artur Menc}
	\affil[1]{Faculty of Fundamental Problems of Technology, Wroclaw University of Technology,  Poland}

	\author[2]{Dominik Paj\k{a}k}
	\affil[2]{Computer Laboratory, University of Cambridge, UK}

	\author[3]{Przemys\l{}aw Uznański\thanks{Part of this work was done while D. Paj\k{a}k was visiting P. Uznański at  Aix-Marseille Université. Partially supported by the Labex Archimède and by the ANR project MACARON (ANR-13-JS02-0002)}}\affil[3]{Helsinki Institute for Information Technology HIIT, Department of Computer Science, Aalto University, Finland}
}

\title{Time and space optimality of rotor-router graph exploration}
\maketitle

\begin{abstract}
	We consider the problem of exploration of an anonymous, port-labeled, undirected graph with $n$ nodes and $m$ edges and diameter $D$, by a single mobile agent. Initially the agent does not know the graph topology nor any of the global parameters. Moreover, the agent does not know the incoming port when entering to a vertex. Each vertex is endowed with memory that can be read and modified by the agent upon its visit to that node. However the agent has no operational memory i.e., it cannot carry any state while traversing an edge. In such a model at least $\log_2 d$ bits are needed at each vertex of degree $d$ for the agent to be able to traverse each graph edge. This number of bits is always sufficient to explore any graph in time $O(mD)$ using algorithm \RR~\cite{Yanovski}. We show that even if the available node memory is unlimited then time $\Omega(n^3)$ is sometimes required for any algorithm. This shows that \RR is asymptotically optimal in the worst-case graphs. Secondly we show that for the case of the path the \RR attains exactly optimal time.


	\end{abstract}




\section{Introduction}
\label{sec:intro}
In this paper we consider the exploration problem of a port-labeled graph in the following setting. The exploration is performed by a single agent that has no memory (oblivious agent) and when it enters to a node it has no information about the port number through which it entered (no inport). Each node contains some number of bits of memory that can be read and modified by the agent upon its visit. Hence the whole navigation mechanism needs to be defined using only local information. 

\RR is an algorithm in which each node maintains a pointer to one of its neighbors and a cyclic sequence of its neighbors. Upon each visit of the token to a node, the token is propagated along the pointer of its current node and the pointer of the node is advanced to the next position in the cyclic sequence. Studies of \RR show that it can be used as a graph-exploration algorithm and there are guarantees on exploration time even is the initial state of the pointers and the sequence at each node can be set by an adversary. In such a case, \RR can be also seen as a space-efficient algorithm since the only information that needs to be stored at a node is the current position of the pointer. Hence $\lceil \log_2 d \rceil$ bits of memory needs to be stored at each node with degree $d$ to implement \RR. It is worth observing that with less memory the task becomes impossible as at each node with degree $d$, the algorithm needs at least $d$ different inputs to traverse every outgoing edge. This paper considers the following question: does there exist an algorithm for exploration for oblivious agent with no inport that is always faster than \RR if we allow more bits of memory at each node? We give a negative answer to this question by showing a graph for which any such algorithm cannot be asymptotically faster than \RR even if we allow unbounded memory at each node.

\subsection{Preliminaries}
 We assume anonymous graph $G = (V,E)$ with $n$ nodes, $m$ edges, diameter $D$ and with no global labeling of the nodes nor the edges. In order to navigate in the graph, the agent needs to locally distinguish between the edges outgoing from its position, so we assume that all edges outgoing from a fixed vertex with degree $d$ are distinctly labeled using numbers $\{1,2,\dots,d\}$. 

The agent is modeled as a memoryless token. Each node contains a label of $\mem_w$ bits which can be read and modified by the agent upon its visit to that node. Such a model will be referred to as \emph{an oblivious agent}. Memory size on each node can depend on the node degree.

Let us denote by $\state_w = \{0,1,\dots,2^{\mem_w}-1\}$ the set node states. An oblivious agent is then defined as a function $f:\state_w \times \mathbb{N} \rightarrow \state_w \times \mathbb{N}$ whose input is a tuple $(s_w,d) \in \state_w \times \mathbb{N}$, where $s_w$ is the state of the node currently occupied by the agent and $d$ is the node degree. The output of function $f$ is a tuple $(s'_w,p) \in \state_w \times \{1,2,\dots,d\}$, where $s'_w$ is the new state of the currently occupied node and $p$ is the port number through which the agent exits the current node in the current step. We say that the agent is located at a node $v$ \textit{at the beginning} of some step $t$, then traverses the chosen arc \textit{during} step $t$ and appears at the other end of the arc at the beginning of step $t+1$. Initially, each node is in a starting state $s^{0}_w$.

Observe that the port label through which the agent entered to the current node is not part of the input. Thus the agent cannot easily backtrack its moves. We call this model feature \textit{an unknown inport}. 

At any step, the configuration of \RR is a triple $((\pi_v)_{v\in V},(\rho_v)_{v\in V},r)$, where $r$ is the current position of the agent. In the current step, the agent traverses the edge indicated by pointer $\pi_r$ which is then advanced to the next edge in the port ordering $\rho_r$.

\paragraph{Graph exploration problem.}
The goal of the agent is to visit all vertices of graph $G$. We assume that the initial agent position in the graph as well as the port-labeling of the edges can be chosen by an adversary. Initially the agent has no knowledge about the topology of $G$ or even its size. 


An oblivious agent that explores all unknown graphs, needs to traverse all edges and thus needs at least $d$ different inputs at any vertex with degree $d$ in order to traverse all of its outgoing edges. This leads to the following lower bound on the total memory at a vertex and on the agent.

\begin{observation}
\label{trivial_lower_bound}
If an oblivious agent explores all graphs in the model with unknown inport, then $\mem_w \geq \log_2 d$ holds for any node with degree $d$.
\end{observation}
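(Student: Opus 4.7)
The plan is to argue by counting the number of distinct port numbers the agent can ever emit at a vertex of fixed degree $d$, and then exhibit a graph that forces every port at some degree-$d$ vertex to be used.

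First I would fix a vertex degree $d$ and look at the transition function $f(\cdot, d) : \state_w \to \state_w \times \{1,\dots,d\}$. Projecting on the second coordinate gives a map $\state_w \to \{1,\dots,d\}$, and its image has size at most $|\state_w| = 2^{\mem_w}$. Since $f$ is deterministic and its only inputs at a vertex are the local state and the degree (recall that the inport is unknown and the agent carries no memory), the set of port numbers the agent can ever use to leave any particular visit to any particular vertex of degree $d$ is contained in this image. Hence, if $2^{\mem_w} < d$, there is some port $p^\star \in \{1,\dots,d\}$ that the agent never outputs whenever it occupies a degree-$d$ vertex.

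Next I would use the adversarial choice of the port-labeling and of the starting vertex to rule this out. Consider the star $K_{1,d}$, with the center $c$ of degree $d$ and $d$ leaves $\ell_1,\dots,\ell_d$ each of degree $1$; place the agent initially at $c$. To visit leaf $\ell_i$ the agent must, on some visit to $c$, exit through the port leading to $\ell_i$. The adversary labels the port from $c$ to $\ell_{p^\star}$ with the missing number $p^\star$; then $\ell_{p^\star}$ is unreachable and the exploration fails on this graph. This contradicts the assumption that the agent explores every graph, so we must have $2^{\mem_w} \geq d$, i.e.\ $\mem_w \geq \log_2 d$.

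The proof has no genuine obstacle: it is a pigeonhole argument once one notices that the agent's output at a vertex depends only on the state and the degree, so the reachable outputs are bounded by $|\state_w|$. The only subtlety worth stating explicitly is that this bound is uniform over all visits and all time steps (because $f$ is a fixed function), so even states produced after many visits cannot enlarge the set of usable port numbers beyond $2^{\mem_w}$.
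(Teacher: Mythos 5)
Your proof is correct and follows essentially the same pigeonhole argument the paper sketches (the outport at a degree-$d$ vertex is determined by the node state alone, so fewer than $d$ states cannot produce all $d$ ports). The only difference is that you supply an explicit witness graph, the star $K_{1,d}$ with the adversary assigning the missing port $p^\star$ to one leaf, whereas the paper leaves the claim that all $d$ outgoing edges must be usable implicit; this is a welcome tightening rather than a different route.
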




\subsection{Our results}
In this paper we show two lower bounds. In the first we show that for any oblivious algorithm with no inport there exists a labeling of a path for which this algorithm needs at least $(n-1)^2$ steps. This shows that unbounded memory at a node cannot decrease the exploration time of the path even by one step. On the other hand \RR needs only $1$ bit and always explores a path in time at most $(n-1)^2$.
For general graphs we show that any oblivious agent in the model with no inport requires time $\Omega(n^3)$ for some graphs regardless of the sizes of node memory. This shows that it is impossible to construct an algorithm for oblivious agent that would be asymptotically faster than the \RR in the worst-case even if unbounded memory at each node is available.

\subsection{Related work}
  When exploring a graph using a \RR mechanism with arbitrary initialization, time $\Theta(mD)$ is always sufficient and sometimes required for any graph~\cite{Yanovski}~\cite{LockIn}. Since the \RR requires no special initialization, it can be implemented in a graph with $\lceil\log_2 d\rceil$-bits of memory at each node with degree $d$. An oblivious agent can simply exit the node $v$ via port $w(v)+1$, where $w(v)$ is the value on the whiteboard, and increment the value $w(v)$ modulo $\vdeg(v)$. Thus exploration in time $\bigo(mD)$ is possible by oblivious agents with $\lceil\log_2 d\rceil$-bit node memory, which is the smallest possible by Observation~\ref{trivial_lower_bound}.

\section{Lower bounds}
\label{sec:lower}
In this section we prove a lower bound on the number of steps of graph exploration for oblivious agent. First we need the following observation, which helps to reason about behavior of oblivious agents in port-labeled graphs.

\begin{lemma}
\label{lem:behaviour}
Behavior of any oblivious agent $\agent$ in graph $G$ with arbitrary size of node memory is fully characterized by the collection of functions $\port_d(i)$ for $d=1,2,\dots$. For a fixed $d$, the function denotes the outport chosen by the agent upon its $i$-th visit to any node with degree $d$.
\end{lemma}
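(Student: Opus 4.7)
The plan is to show by induction on visit count that the trajectory of state transitions at any single node of degree $d$ is completely predetermined by $d$ and the visit index, and therefore so is the outport chosen on each visit.

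First I would fix an arbitrary node $v$ of degree $d$ and denote by $s^{(i)}(v)$ its memory state immediately after the $i$-th visit of $\agent$ (with $s^{(0)}(v)$ being the starting state, which by assumption is the uniform $s^0_w$ for every node). The key structural observation to invoke here is that the node state of $v$ can only change when $\agent$ is located at $v$: the transition function $f$ operates solely on the currently occupied node, so between two consecutive visits to $v$ the value $s^{(i)}(v)$ is frozen, regardless of what happens elsewhere in $G$. This is what decouples $v$'s state evolution from the global history.

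Next I would induct on $i$. For the base case, on the first visit $\agent$ reads the pair $(s^0_w, d)$ and outputs $f(s^0_w, d) = (s^1_d, p^1_d)$; both components depend only on $d$, not on $v$. For the inductive step, assuming $s^{(i-1)}(v) = s^{i-1}_d$ depends only on $d$ (by the induction hypothesis combined with the freezing observation above), the $i$-th visit evaluates $f(s^{i-1}_d, d) = (s^i_d, p^i_d)$, again depending only on $d$. Define $\port_d(i) := p^i_d$; this is well-defined and node-independent. Since any two nodes of the same degree $d$ share the same starting state and are driven by the same transition function on the same inputs in the same order, they produce identical sequences of outport choices.

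Finally I would observe that the full trajectory of $\agent$ in $G$ is generated step by step: at each step the agent sits at some node $v$, and its next move is $\port_{\vdeg(v)}(i)$ where $i$ is the number of times $v$ has been visited so far (including the current visit). Hence, given the graph with its port labeling and the starting vertex, the entire walk is determined by the family $\{\port_d(\cdot)\}_{d \geq 1}$, which is the desired characterization. No step presents a genuine obstacle; the only subtlety worth stating explicitly is the locality of state updates, which is what makes the induction go through cleanly and rules out any cross-contamination between nodes.
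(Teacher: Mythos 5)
Your proof is correct and follows essentially the same reasoning as the paper's (much terser) argument: since the agent carries no memory and cannot see the inport, the only evolving information at a node is its local state, which changes only during visits and starts from a common initial value, so the $i$-th outport at a degree-$d$ node is a function of $d$ and $i$ alone. Your explicit induction on the visit count merely fleshes out what the paper states in two sentences; there is no substantive difference in approach.
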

\begin{proof}
Since the agent has no internal memory and does not know the label of the port through which it enters to a node, the only information the agent has is the degree of the current node and the state of the node memory. Thus the label of the next outport taken from a node $v$ can only depend on the degree of $v$ and on the labels of previously chosen outports from node $v$. 
\end{proof}

The following lemma characterizes the worst-case exploration time for any oblivious agent on a path.

\begin{theorem}
\label{thm:path}
Let $P$ be a path on $n$ vertices $v_1,v_2,\ldots,v_n$. For any oblivious agent $\agent$, starting on vertex $v_n$, there exists port labeling of $P$ such that if $\agent$ visits $v_1$, then:
\begin{enumerate}
\item \label{path1} $\agent$ makes at least $(n-1)^2$ edge traversals before its first visit to $v_1$,
\item \label{path2} $\agent$ traverses arc $v_n \rightarrow v_{n-1}$ at least $n-1$ times before its first visit to $v_1$.
\end{enumerate}
\end{theorem}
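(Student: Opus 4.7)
The plan is to use Lemma~\ref{lem:behaviour} to reduce the agent to a single port sequence $\pi = \port_2(1), \port_2(2), \ldots$, and the adversary's labeling to a single bit $x_j \in \{1,2\}$ at each internal vertex $v_j$ (for $2 \le j \le n-1$) specifying which of its two ports leads to $v_{j-1}$; the $i$-th visit to $v_j$ then sends the agent ``left'' (toward $v_{j-1}$) iff $\pi(i) = x_j$. After possibly swapping the two port labels globally, assume $\pi(1) = 1$, and let $g(c)$, $h(c)$ denote the indices of the $c$-th occurrence of $2$ and of $1$ in $\pi$ respectively, with value $\infty$ if no such index exists. The core combinatorial fact I will use is $\max(g(c), h(c)) \ge 2c$ whenever both are finite, which follows because a prefix of length $\max(g(c), h(c))$ contains both at least $c$ ones and at least $c$ twos.

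I define the adversary's labeling by the recursion $c_2^* = 1$ and $c_{j+1}^* = \max(g(c_j^*), h(c_j^*)) - c_j^* + 1$, with $x_j = 2$ when $g(c_j^*) \ge h(c_j^*)$ and $x_j = 1$ otherwise. If at some $j$ the maximum is infinite, the adversary commits to the infinite branch; the agent is then permanently confined to $\{v_j, \ldots, v_n\}$ and never visits $v_1$, so the implication of the theorem holds vacuously. Otherwise the combinatorial fact yields $c_{j+1}^* \ge c_j^* + 1$, and hence $c_j^* \ge j - 1$ for all $j$.

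Assume now the agent does visit $v_1$. Writing $A_j$ and $B_j$ for the left- and right-going traversals of edge $\{v_j, v_{j+1}\}$ before that first visit, flow conservation at each vertex forces $A_j - B_j = 1$, which yields $V_j = c_j + c_{j+1} - 1$ for $2 \le j \le n-1$ and $V_n = c_n$, where $c_j := A_{j-1}$ and $V_j$ is the number of visits to $v_j$. The adversary's choice makes the $c_j$ left-moves from $v_j$ fall precisely on those positions of $\pi$ equal to $x_j$, so $V_j \ge g(c_j)$ if $x_j = 2$ and $V_j \ge h(c_j)$ if $x_j = 1$. Since $c \mapsto g(c) - c$ and $c \mapsto h(c) - c$ are non-decreasing (each counts the occurrences of the other value in a growing prefix of $\pi$), induction on $j$ gives $c_j \ge c_j^* \ge j - 1$. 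It follows that $V_n = c_n \ge n - 1$, proving Part~\ref{path2}, and $V_j \ge 2(j-1)$ for each $j < n$, so $\sum_{j=2}^{n} V_j \ge (n-1) + \sum_{j=2}^{n-1} 2(j-1) = (n-1)^2$ edge traversals, proving Part~\ref{path1}.

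The main difficulty I expect is decoupling the precomputed sequence $c_j^*$ (on which the non-adaptive adversary labeling is based) from the values $c_j$ actually realized in the trajectory, since the agent may revisit $v_j$ after its last leftward exit and inflate $V_j$ past the minimum. The monotonicity of $g(c) - c$ and $h(c) - c$ is precisely what lets the inequality $c_j \ge c_j^*$ propagate through the induction, and the ``blocking'' branch has to be handled separately so as to produce a labeling under which the implication in the theorem is vacuously satisfied.
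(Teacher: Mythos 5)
Your proof is correct and rests on the same core idea as the paper's: at each internal vertex the adversary assigns the port toward $v_1$ to the symbol of $\port_2$ that occurs ``later,'' and your pigeonhole fact $\max(g(c),h(c)) \ge 2c$ is exactly the paper's choice of the majority symbol among the first $2(i-1)-1$ terms, yielding the same growth $c_{j+1}^* \ge c_j^* + 1$ in the number of leftward crossings per edge. The paper packages this as an induction on the path length with the theorem itself as the inductive hypothesis, whereas your flow-conservation bookkeeping, the decoupling of the precomputed targets $c_j^*$ from the realized counts $c_j$ via monotonicity of $g(c)-c$ and $h(c)-c$, and the explicit vacuous branch when some occurrence index is infinite make rigorous several details the paper's inductive step leaves implicit; both arguments are valid.
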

\begin{proof}
Let us fix any oblivious agent $\agent$. By Lemma~\ref{lem:behaviour} we can also fix its sequence $\{a_i\}_{i=1}^{\infty}$ of exits from each node of degree $2$ (that is $a_i = \port_2(i) \in \{1,2\}$).

The proof will proceed by induction on the number of nodes of the path. For $i=2$, the agent makes one traversal of $v_2 \rightarrow v_1$ before its first visit to $v_1$. 

Let us assume, that the claim is true for some $i\geq 2$ and we will show it for $i+1$. The agent starts at node $v_{i+1}$. We will use the inductive assumption for subpath of nodes $v_{i},\dots,v_2,v_1$. The agent needs to traverse arc $v_{i} \rightarrow v_{i-1}$ at least $i-1$ times before its first visit in $v_1$. We want to choose port labeling of node $v_{i}$. Take first $2(i-1)-1$ elements of sequence $\{a_i\}$ and take the element $\alpha$ that appears in it at least $(i-1)$ times. 
Set $\alpha$ to be the label of the arc $v_{i} \rightarrow v_{i+1}$. The other element, $\{1,2\} \setminus \{\alpha \}$ is the label of the arc $v_{i} \rightarrow v_{i-1}$. Observe that under such port labeling, before $i-1$ traversals of $v_{i} \rightarrow v_{i-1}$, arc $v_{i} \rightarrow v_{i+1}$ will be traversed at least $i-1$ times. This means that before the first visit to $v_1$, the agent will enter node $v_{i+1}$ at least $i-1$ times. Since the agent started at $v_{i+1}$ then before the first visit to $v_1$ it will traverse edge $v_{i+1} \rightarrow v_i$ at least $i$ times. The total number of steps within subpath $v_i,\dots v_1$ is at least $(i-1)^2$ by the inductive assumption. Additionally the agent traverses $v_{i+1} \rightarrow v_i$ at least $i$ and $v_i \rightarrow v_{i+1}$ at least $i-1$ times thus the total number of traversals is at least $(i-1)^2+(i-1)+i = i^2$.
\end{proof}
The previous theorem showed that any strategy for oblivious agents requires at least $(n-1)^2$ steps to explore a path. Interestingly, $(n-1)^2$ is also the worst-case number of steps to explore a path for the \RR. It shows that the \RR is optimal on a path. It means that even adding arbitrarily large node memory cannot provide any speedup (not only asymptotic one) for path exploration, when compared to just one bit node memory sufficient to implement \RR on the path.

The next theorem shows a lower bound on exploration time for oblivious agents on arbitrary graphs.
\begin{theorem}
\label{thm:lower_general}
For any value of $n$ and any any oblivious agent $\agent$ there exists graph $G$ with $n$ vertices, such that $\agent$ needs at least $\Omega(n^3)$ time steps to visit all vertices of $G$.
\end{theorem}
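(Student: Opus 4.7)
My plan is to amplify Theorem~\ref{thm:path} from a quadratic path bound to a cubic bound on some graph by exhibiting a construction in which the path cost must be paid many times. The natural candidate is a \emph{lollipop}-type graph $G_n$ on $n$ vertices: a clique $K$ on $\lfloor n/2\rfloor$ vertices $\{c,u_1,\dots,u_{\lfloor n/2\rfloor-1}\}$ joined at a single vertex $c$ to a pendant path $c=p_0,p_1,\dots,p_{\lceil n/2\rceil}$. This graph has $m=\Theta(n^2)$ edges and diameter $D=\Theta(n)$, so $mD=\Theta(n^3)$ matches the $O(mD)$ upper bound of \RR, which is exactly the quantity we hope the lower bound matches. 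I would place the agent at the far leaf $p_{\lceil n/2\rceil}$.

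By Lemma~\ref{lem:behaviour}, the entire behavior of the agent on $G_n$ is governed by its functions $\port_{2}$, $\port_{\lfloor n/2\rfloor-1}$ and $\port_{\lfloor n/2\rfloor}$. On the pendant path I would reuse the port labeling from the proof of Theorem~\ref{thm:path} (built from $\port_{2}$), so that the very first arrival at $c$ already costs $\Omega(n^2)$ steps. On the clique side I would apply a pigeonhole argument: in any window of the first $T$ values of $\port_{\lfloor n/2\rfloor}$, some port index appears at least $T/\lfloor n/2\rfloor$ times, and I would assign the edge from $c$ into the pendant path to carry that most-used index. This way a constant fraction of the agent's visits to $c$ are ``wasted'' re-entries into the pendant path rather than productive excursions toward a fresh clique vertex. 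Analogous pigeonhole choices at each $u_i$ using $\port_{\lfloor n/2\rfloor-1}$ throttle how fast the agent can make progress inside the clique between two successive visits to $c$.

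Combining these ingredients, visiting all $\Theta(n)$ clique vertices requires $\Omega(n)$ productive returns to $c$; the pigeonhole at $c$ interleaves each of these with an $\Omega(1)$ fraction of wasted re-entries into the pendant path; and each such re-entry should cost $\Omega(n^2)$ path steps by iterating the inductive argument of Theorem~\ref{thm:path}. Multiplying yields $\Omega(n)\cdot\Omega(n^2)=\Omega(n^3)$, as desired. The main obstacle I anticipate is exactly this last claim: Theorem~\ref{thm:path} controls only the \emph{first} visit of the agent to the distant end of the path, whereas here each of the many re-entries occurs with the $\port_{2}$ counters at the path vertices already partially advanced. I would try to close this gap with an amortized strengthening of the proof of Theorem~\ref{thm:path}, in which the pigeonhole-majority port of $\port_{2}$ remains the ``outward'' port across many successive windows of visits, so that every re-entry still forces a full quadratic inward detour. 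Formalizing this amortized path bound is, in my view, the technical heart of the argument, and the step where I would expect to invest most of the effort.
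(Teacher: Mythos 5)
There is a genuine gap, and it sits exactly where you locate it: the claim that each re-entry into the pendant path costs $\Omega(n^2)$ steps. Theorem~\ref{thm:path} bounds only the total work before the \emph{first} arrival at the far end of the path; once that end has been reached the path is explored and no further lower bound on excursion cost is available. Worse, the ``amortized strengthening'' you hope for is provably false: take the oblivious agent with $\port_2(i)$ alternating $1,2,1,2,\dots$ (the \RR itself restricted to degree-$2$ nodes). After an initial transient of at most quadratically many steps, the rotors on a $k$-vertex pendant path settle into an Eulerian circulation, and every subsequent excursion into the path costs only $O(k)$ steps. So the per-entry cost degrades from $\Omega(n^2)$ to $O(n)$, and your product collapses to $O(n^2)$. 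The other factor is also shaky: ``visiting all $\Theta(n)$ clique vertices requires $\Omega(n)$ productive returns to $c$'' is unsupported, since in a clique a single excursion can in principle visit every vertex without returning to $c$; forcing many returns would require a separate adversarial argument inside the clique that you have not supplied.

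The paper's proof uses a similar graph but inverts your pigeonhole, and this is the idea you are missing. Instead of making the port from the attachment vertex into the path the \emph{most} frequent value of $\port_d$ (so the path is entered often and must be expensive each time), it chooses a port $p$ that appears \emph{at most} $d-1$ times among the first $d(d-1)$ values of $\port_d$, where $d=n/3$. It also pads \emph{every} clique vertex with a pendant leaf so that all clique vertices have equal degree $d$ and are indistinguishable to the agent; it then simulates the agent for $d^2(d-1)$ steps on this padded graph $G_1$, selects by pigeonhole a clique vertex $v^*$ visited at most $d(d-1)$ times, and only then replaces the leaf at $v^*$ by the hard path of Theorem~\ref{thm:path} with $d+1$ nodes, reached through port $p$. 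By Lemma~\ref{lem:behaviour} the walk on the clique part is unchanged, so within $d^2(d-1)$ steps the entrance $v_f$ of the path is visited at most $d-1$ times, while part~\ref{path2} of Theorem~\ref{thm:path} forces at least $d$ visits to $v_f$ before the path is explored. The cubic bound thus comes from the clique \emph{delaying} successive entries into the path, not from each entry being expensive. Note finally that your lollipop fixes the attachment vertex $c$ in advance (it is the unique vertex of its degree), which forfeits precisely the adversary's freedom to place the path at the least-visited clique vertex after observing the walk.
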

\begin{proof}
Fix agent $\agent$ and size of the graph $n$. Assume that $n$ divisible by $3$. 
Let $d = n/3$. Consider the sequence $a_i = \port_{d}(i) \in \{1,2,\dots d\}$ for $i=1,2,\dots$, where $\port_{d}(i)$ is defined in Lemma~\ref{lem:behaviour}. In the prefix of length $d \cdot (d-1) $ of sequence $\{a_i\}$ there exists a value $p$ that appears at most $d-1$ times. 

Consider a graph $G_1$ constructed from the clique $K_{d}$ by attaching one node $v'$ to each node $v$ of the clique (see Figure~\ref{fig:G1} for an example). Observe that each node coming from clique $K_{d}$ has an additional neighbor thus its degree in $G_1$ is $d$. For each $v$ from the clique, the port leading to $v'$ is $p$. All other ports are set arbitrarily.

\begin{figure}[ht]
\centering
\begin{minipage}{.5\textwidth}
  \centering
  \includegraphics[width=0.65\linewidth]{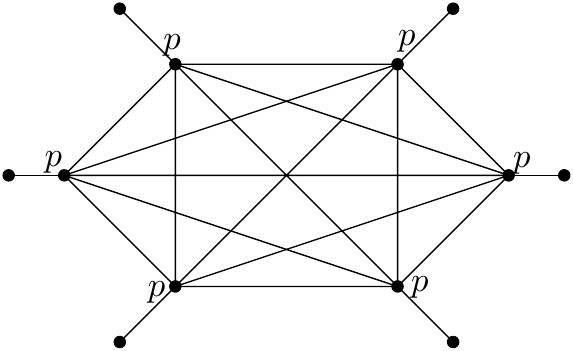}
  \captionof{figure}{Example of graph $G_1$ for $n = 18$.}
  \label{fig:G1}
\end{minipage}%
\begin{minipage}{.5\textwidth}
  \centering
  \includegraphics[width=\linewidth]{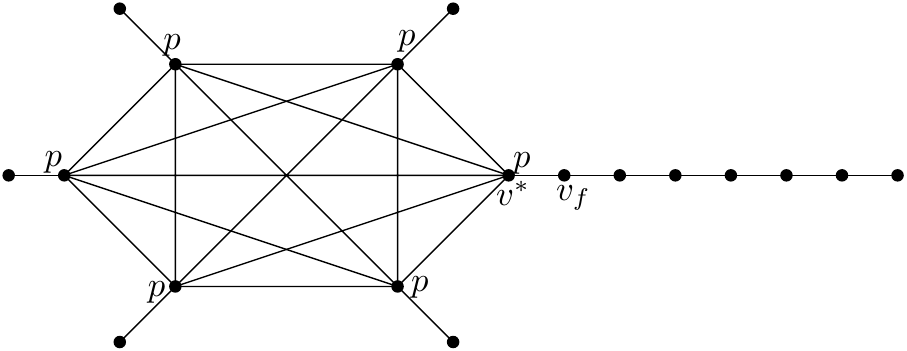}
  \captionof{figure}{Example of graph $G$ for $n = 18$.}
  \label{fig:G}
\end{minipage}
\end{figure}

Consider a walk of agent $\agent$ on graph $G_1$ starting from an arbitrary vertex $v_s$ for $d^2 \cdot (d-1)$ steps. There exists a vertex $v^*$ from the original clique $K_{d}$ that was visited at most $d \cdot (d-1)$ times. 

Construct a graph $G$ by modifying $G_1$. Replace the one additional node attached to $v^*$ with a path $P$ of $d+1$ nodes. Set the worst-case port-labeling of path $P$, as in the Theorem~\ref{thm:path}, depending on function $\port_2(i)$ of $\agent$. Denote by $v_f$, the first node of $P$ that is connected to $v^*$ (see Figure~\ref{fig:G} for an example).

Consider agent $\agent$ exploring the graph $G$ starting from vertex $v_s$. Since the agent is oblivious, its moves between vertices in $G$ that come from original graph $G_1$ are the same as in the graph $G_1$. Thus within $d^2 \cdot (d-1)$ steps in $G$, node $v^*$ is visited at most $d \cdot (d-1)$ times. Since $p$ is the port leading from $v^*$ to $v_f$ then after $d \cdot (d-1)$ visits to $v^*$, agent visited $v_f$ at most $d-1$.
But by Theorem~\ref{thm:path}, the agent needs to visit $v_f$ at least $d$ times to explore path $P$. Thus the agent needs time at least $d^2 \cdot (d-1) = \Omega(n^3)$ to explore graph $G$.

If $n$ is not divisible by $3$ we can add the remaining vertices to the path and the exploration time will be at least $\left\lfloor \frac{n}{3} \right\rfloor^2 \cdot \left(\left\lfloor \frac{n}{3} \right\rfloor-1\right) = \Omega(n^3)$
\end{proof}
The theorem shows that, even with unbounded node memory, the oblivious agents need $\Omega(n^3)$ steps to explore some graphs. Since the \RR explores any graph in time $\bigo(mD) = \bigo(n^3)$~\cite{Yanovski} there is no strategy for oblivious agents that would be faster in the worst-case. Observe also that the \RR can be implemented using node memory of minimum possible size $\lceil \log_2 d \rceil$ at nodes of degree $d$. By Observation \ref{trivial_lower_bound}, agent with less memory cannot traverse all outgoing edges. Thus the \RR is both time and space optimal strategy for oblivious agents.

\bibliographystyle{abbrv}
\bibliography{one-bit}



\end{document}